\newcommand{\bbone}{\openone}
\newcommand{\bra}[1]{\mbox{$\langle #1|$}}
\newcommand{\ket}[1]{\mbox{$|#1\rangle$}}
\newcommand{\ketbra}[2]{\mbox{$|#1\rangle\langle #2|$}}
\newcommand{\projector}[1]{\mbox{$|#1\rangle\langle #1|$}}
\DeclareMathOperator{\Span}{span}
\newcommand{\LH}{{\sc local Hamiltonian}}
\newcommand{\comm}[2]{[#1, \;#2]}
\newcommand{\QMA}{{\sf QMA}}
\newcommand{\eg}{e.g.~}
\newcommand{\ie}{i.e.~}
\newcommand{\Had}{\mathbb{H}_P}
\newcommand{\tr}{\operatornamewithlimits{Tr}}
\newcommand{\I}{\mathbf{1}}
\newcommand{\Figref}[1]{Fig.~\ref{#1}}
\newcommand{\be}{\begin{equation}}
\newcommand{\ee}{\end{equation}}
\newtheorem{lemma}{Lemma}
\begin{document}
\title{Adiabatic Quantum Simulators}
\author{J.D. Biamonte}
\affiliation{Department of Chemistry and Chemical Biology, Harvard University, Cambridge, MA 02138, USA}
\affiliation{Oxford University Computing Laboratory, Oxford OX1 3QD, UK}

\author{V. Bergholm}
\affiliation{Department of Chemistry and Chemical Biology, Harvard University, Cambridge, MA 02138, USA}

\author{J.D. Whitfield}
\affiliation{Department of Chemistry and Chemical Biology, Harvard University, Cambridge, MA 02138, USA}

\author{J. Fitzsimons}
\affiliation{Department of Materials, University of Oxford, Oxford OX1 3PH, UK}
\affiliation{Institute for Quantum Computing, University of Waterloo, Waterloo, Ontario, Canada}

\author{A. Aspuru-Guzik}
\email{aspuru@chemistry.harvard.edu}
\homepage{http://aspuru.chem.harvard.edu}
\affiliation{Department of Chemistry and Chemical Biology, Harvard University, Cambridge, MA 02138, USA}

\pacs{03.67.Ac, 03.67.Lx}

\begin{abstract}

In his famous 1981 talk, Feynman proposed that unlike classical
computers, which would presumably experience an exponential slowdown when
simulating quantum phenomena, a universal quantum simulator would not.
An ideal quantum simulator would be controllable, and
built using existing technology.
In some cases, moving away from gate-model-based implementations of
quantum computing may offer a more feasible solution for particular experimental implementations.
Here we consider an adiabatic quantum simulator which simulates the
ground state properties of sparse Hamiltonians consisting of one- and
two-local interaction terms, using sparse Hamiltonians with at most
three-local interactions.
Properties of such Hamiltonians can be well approximated with Hamiltonians containing only
two-local terms.
The register holding the simulated ground state is brought
adiabatically into interaction with a probe qubit, followed by a
single diabatic gate operation on the probe which then undergoes free
evolution until measured.
This allows one to recover e.g. the ground state energy of the
Hamiltonian being simulated.
Given a ground state, this scheme can be used to verify the \QMA-complete
problem \LH{}, and is therefore likely more powerful 
than classical computing.
\end{abstract}
\maketitle

\section{Introduction}

Computer simulation of quantum mechanical systems is an indispensable tool
in all physical sciences dealing with nanoscale phenomena.
Except for specific and rare cases~\cite{verstraete08}, classical
computers have not been able to efficiently simulate quantum systems, as in all known techniques at
least one of the 
computational resources required to perform the simulation scales
exponentially with the size of the system being simulated.

Numerous classical approximative methods, such as density functional
theory~(DFT~\cite{ParrYang}) and quantum Monte Carlo
(QMC~\cite{WMH09}) have been developed to address various
aspects of the efficiency problem, but no known polynomially scaling methods are universally
applicable.  Each suffers from particular deficiencies
such as the fermionic sign problem of QMC or the approximate
exchange-correlation functionals of DFT.  
Quantum computers on the other hand, as conjectured by
Feynman~\cite{Feynman1982}, may be used to simulate other quantum
mechanical systems efficiently.
Feynman's conjecture was subsequently 
expanded leading to the rapidly growing area
of study known as
\textit{quantum simulation}~\cite{Lloyd1996,Lidar1999,Somma2002b,
Laflamme2004,ADLH2005,Brown2006,Lanyon2008Sub,whitfield-2010, 2010arXiv1008.4116M}.

Quantum simulation is expected to be able to produce classically
unattainable results in feasible run times, using only a modest number of
fault tolerant (or error corrected) quantum bits.
For example, calculating the ground state energy of the water molecule to
the level of precision necessary for experimental predictions
($\approx 1$ kcal/mol) --- a problem barely solvable on current
supercomputers~\cite{chan_exact_2003} --- would require roughly~128
coherent quantum bits (before error correction)~\cite{ADLH2005}, 
and on the order of billions of quantum gates~\cite{CBMG09}.
To date, several experimental implementations of  
quantum simulation algorithms have been done for small systems~\cite{Lanyon2008Sub,FSG+08,GKZ09}.

Theoretical quantum simulation falls into two main categories:
\textit{dynamic
evolution}
and \textit{static properties}. Both categories
rely
heavily on the Trotter decomposition to handle
non-commuting terms in the Hamiltonian when mimicking
the unitary time propagator of the system to be simulated.
To approximate evolution under a Hamiltonian $H = \sum_{i=1}^k H_i$
consisting of $k$~non-commuting but local terms~$\{H_i\}_{i=1}^k$,
one applies the sequence of unitary gates
$\{e^{-iH_it/n}\}_{i=1}^k$ a total of $n$ times.
As the number of repetitions~$n$ 
tends to infinity the approximation error caused by the non-commutativity
vanishes and the approximation converges to the exact
result~\cite{Lloyd1996}. If each time slice requires a constant number of
gates independent of the
parameter~$t/n$, then reducing the approximation error by repeating the sequence
$n$~times can become expensive for high accuracy applications~\cite{CBMG09}.

Constructing a practical method of quantum simulation is a significant
challenge. Gate-model based simulation methods (with quantum error
correction) can provide a scalable solution but are well out of reach
of present-day experiments, except for small systems.
On the other hand, some experimental implementations seem to be well suited to
operate as adiabatic processors.
For these setups, moving away from the gate model may offer a less
resource-intensive, and consequently a more feasible solution
for simulating medium-sized quantum systems.
This paper addresses the problem by presenting a 
hybrid model of quantum simulation,
consisting of an adiabatically controlled simulation
register coupled to a single gate-model readout qubit.
Our scheme can simulate the constant observables
of arbitrary spin graph Hamiltonians.
It allows the measurement of the expectation value of any
constant $k$-local observable using a $k+1$-local measurement Hamiltonian. 
Fault tolerance of the adiabatic model of quantum computing is a topic of growing interest.  
Its robustness in the presence of noise has been studied in~\cite{CFP01,taqc2008}. 
We only consider the simulation protocol here, and don't study the
adiabatic error correction that would be required for a
practical large scale implementation.

In most quantum computing architectures, the natural interactions are
two-local. However, under certain conditions $k$-local
interactions can be well approximated using techniques such as
perturbative Hamiltonian gadgets~\cite{kempe2006,oliveira2008,biamonte-2008-78, PhysRevA.77.052331}
or average Hamiltonian methods~\cite{waugh1968approach} --- this
provides even the possibility of utilizing gate model fault tolerance
to protect the slower adiabatic evolution, an approach that seems promising.  We note that reference~\cite{PhysRevLett.101.070503} considered the mapping of
a given $n$-qubit target Hamiltonian with $k$-local interactions onto a
simulator Hamiltonian with two-local interactions.

\textbf{Structure of this paper:}
We will continue by giving an overview of the simulator design,
including initialization, adiabatic evolution and
measurement.
The following section investigates the performance of the method in
simulating a small system in the presence of noise.
Finally, we will present our conclusions.
Appendix~\ref{sec:suppmat} further details the specifics of our method
(including the numeric simulation of the technique).

\section{Simulator overview}

We consider the simulation of systems represented by finite collections
of spins acted on by a time-independent Hamiltonian described by a graph
$G = (V, E)$ --- \eg~the Heisenberg and Ising models.
Each graph vertex~$v\in V$ corresponds to a spin acted on by a local
Hamiltonian $L_{v}$, and each edge~$e\in E$ to a
two-local interaction~$K_{e}$ between the involved vertices.
The Hamiltonian~$H_{T}$ we wish to simulate is given by
\begin{equation}
H_{T} = \sum_{v \in V} L_v +\sum_{e \in E} K_e.
\end{equation}

The simulator consists of an adiabatically controlled
simulation register~$S$ with Hamiltonian~$H_S$, and a probe register~$P$
which will be acted on by gate operations and measured projectively. 
We will engineer the probe~$P$ such that it behaves as a controllable
two-level system with the orthonormal basis $\{\ket{p_0}, \ket{p_1}\}$.
Without loss of generality, the probe Hamiltonian can be expressed as
$H_P = \delta \projector{p_1}$, where~$\delta$ is the spectral gap
between the probe's ground and first excited states.

\textit{Initialization:}
We will first set the simulation register Hamiltonian~$H_S$
to~$H_{S,I}$ and prepare $S$ and $P$ in their respective
ground states.  The Hamiltonian~$H_{S,I}$ has a simple
ground state which can be (i) computed classically and (ii) prepared
experimentally in polynomial time --- such as a classical approximation to
the
ground
state of the simulated system. 
The simulator Hamiltonian is thus initially given by
\begin{equation}
H_0 = H_{S,I} \otimes \bbone_P + \bbone_S \otimes H_P.
\end{equation}

\textit{Adiabatic evolution:}
According to the adiabatic theorem~\cite{FGGS00},
a quantum system prepared in an energy eigenstate will remain near 
the corresponding instantaneous eigenstate of the time-dependent
Hamiltonian governing the evolution if there are no level crossings
and the Hamiltonian varies slowly enough.
By adjusting the simulator parameters, we adiabatically change
$H_S$ from $H_{S,I}$ to $H_{S,T}$, the fully interacting Hamiltonian of the
system to be
simulated.

Let us denote the ground state of~$H_{S,T}$ as~$\ket{s_{0}}$. 
At the end of a successful adiabatic evolution $P$~is still in its ground
state~$\ket{p_0}$, 
and $S$ is in (a good approximation to) the ground state of the simulated
system, $\ket{s_0}$. Hence the simulator is now in the ground state
$\ket{\text{g}} = \ket{s_0}\otimes\ket{p_0}$
of its instantaneous Hamiltonian
\begin{equation}
H_1 = H_{S,T} \otimes \bbone_P + \bbone_S \otimes H_P.
\end{equation}

The computational complexity of preparing ground states of quantum systems
has been studied~\cite{kempe2006,oliveira2008,biamonte-2008-78}.  It is
possible to prepare a desired ground state efficiently provided that
the gap between the ground and excited states is sufficiently
large~\cite{FGGS00} (see alternative methods in \cite{PhysRevLett.103.120504}). This depends on the initial and final
Hamiltonians and on the adiabatic path taken.  In general, finding the
ground state energy of a Hamiltonian, even when restricted to certain
simple models, is known to be complete for QMA, the quantum analogue
of the class NP~\cite{kempe2006,oliveira2008,biamonte-2008-78}.
In fact there are physical systems such as spin glasses
in nature which may never settle into their ground states.

However, a host of realistic systems (\eg insulators, molecular systems) can on
physical grounds be expected to retain a large energy gap and should thus be
amenable to quantum simulation algorithms which rely on adiabatic
state preparation.  

\textit{Measurement:}
The measurement procedure begins by bringing $S$~and $P$ adiabatically
into interaction.
The simulator Hamiltonian becomes
\begin{equation}
H_2 = H_1 + \underbrace{A \otimes \ketbra{p_1}{p_1}}_{H_{SP}},
\end{equation}
where the operator~$A$ corresponds to an observable of the simulated system
that is a constant of motion, \ie, commutes with~$H_{S,T}$.
Hence the total energy itself can always be measured by choosing
$A = H_{S,T}$.
Other such observables depend on the particular system and
often can be analytically constructed given the Hamiltonian.

Let us use $a_{\text{min}}$ to denote the lowest eigenvalue of~$A$.
If $a_{\text{min}} +\delta > 0$,
then $\ket{\text{g}}$ is also the ground state of~$H_2$
and in the absence of noise the transitions from the ground state are
perfectly suppressed during the adiabatic evolution
(see Appendix~\ref{sec:suppmat} for proof).
Assuming that $A$ can be decomposed into a sum of
two-local operators, the interaction term $H_{SP}$ involves
three-local interactions. These terms can be implemented using either
Hamiltonian gadget techniques or average Hamiltonians
(see Appendix~\ref{sec:suppmat} for details).

After the adiabatic evolution, at time $t=0$, we apply a Hadamard gate to the measurement probe 
which puts it into a superposition of its two lowest states.
This is no longer an eigenstate of $H_2$, and the system will
evolve as
\begin{equation}
\ket{\psi(t)} = \frac{1}{\sqrt{2}} \ket{s_0} \otimes (\ket{p_0} +
e^{-i  \omega t}\ket{p_1}), 
\end{equation}
where $\omega := (a_0+\delta)/\hbar$,
and $a_0 := \bra{s_0} A \ket{s_0}$
is the expectation value we wish to measure.
We have thus encoded the
quantity~$a_0$, a property of the ground state of~$H_{S,T}$, into the
time dependence of the probe~$P$.

After a time $t$, we again apply a Hadamard gate to the probe,
resulting in the state
\begin{equation}\label{eqn:phase}
\ket{\psi(t)} = \ket{s_0} \otimes \left(\cos \left(\omega t /2\right)
\ket{p_0} +i \sin \left(\omega t /2\right) \ket{p_1}\right),
\end{equation}
and then measure the probe. The probability of finding it in the state
$\ket{p_0}$ is
\begin{equation}
\label{eqn:prob}
P_0(t) = \frac{1}{2}\left(1 + \cos(\omega t)\right) =
\cos^2\left(\omega t /2\right).
\end{equation}
If we have non-demolition measurements (see \eg~\cite{siddiqi2005}) at
our disposal, then measuring the probe does not disturb the state of
the simulator which can
be reused for another measurement. 

One repeats the measurement with different values of~$t$ until
sufficient statistics have been accumulated to reconstruct~$\omega$
and hence~$a_0$ --- this is reminiscent of Ramsey
spectroscopy~\cite{Ram63} and hence should seem natural to
experimentalists. In essence, we have performed Kitaev's phase estimation
algorithm~\cite{Kitaev1995}, using the interaction
Hamiltonian~$H_{SP}$ instead of a controlled unitary.

If the ground state subspace of~$H_{S,T}$ is degenerate and overlaps
more than one eigenspace of~$A$, or
the simulation register~$S$ has excitations to higher
energy states at the beginning of the measurement phase,
the probability of finding the probe in the state $\ket{p_0}$ is given
by a superposition of harmonic modes. For example, for a thermalized state
with inverse temperature~$\beta$, we obtain
\begin{equation}
\label{eqn:prob2}
P_0(t) = \frac{1}{2}\left(1 +\frac{1}{\sum_{xy} e^{-\beta E_x}} \sum_{kl} e^{-\beta E_k} \cos(\omega_{k,l} t)\right),
\end{equation}
where the first summation index runs over the energy eigenstates and
the second over the eigenstates of
$A$ in which energy has value~$E_k$, and $\omega_{k,l} = (a_{k,l}+\delta)/\hbar$.

\section{Effects of noise}
\label{sec:noise}

Any large scale implementation of the proposed method would most likely require
adiabatic error correction~\cite{Jordan2006,Lidar2008,AAN09}.
However, for small systems this might not be necessary indicating that our
method is feasible for immediate experimental technologies. For this reason, we will examine how robust a small scale implementation of our protocol would be.

To assess the effects of noise on the simulation method, we performed
a numerical simulation of the simplest nontrivial implementation of the
hybrid simulator, consisting of two simulator qubits and one probe
qubit, with a randomly chosen~$H_T$.
Each qubit was coupled to its own bosonic heat bath with an Ohmic
spectral density using the Born-Markov approximation~\cite{BP}.
The qubit-bath couplings were chosen such that the resulting single-qubit
decoherence times $T_1$ and $T_2$ are compatible with
recent superconducting flux qubit experiments with fully tunable
couplings (see for example~\cite{niskanen2007}).
The noise model is described further in Appendix~\ref{sec:suppmat}.
The observable~$A$ was chosen to be $H_{S,T}$, the simulated
Hamiltonian itself, which means that the ground state energy was being
measured.

We simulated measurements on 40 evenly distributed values of the time
delay~$t$. At each $t_i$ we performed 50 measurements, and averaged
the results to get an estimate of~$P_0(t_i)$.
An exponentially decaying scaled cosine function was then fitted to
this set of data points to obtain an estimate for~$\omega$ and thus for~$s_0$.
The fit was done using the MATLAB {\sc lsqnonlin} algorithm, guided
only by fixed order-of-magnitude initial guesses for the parameter values.

\begin{figure}[h]
\includegraphics[width=0.5\textwidth]{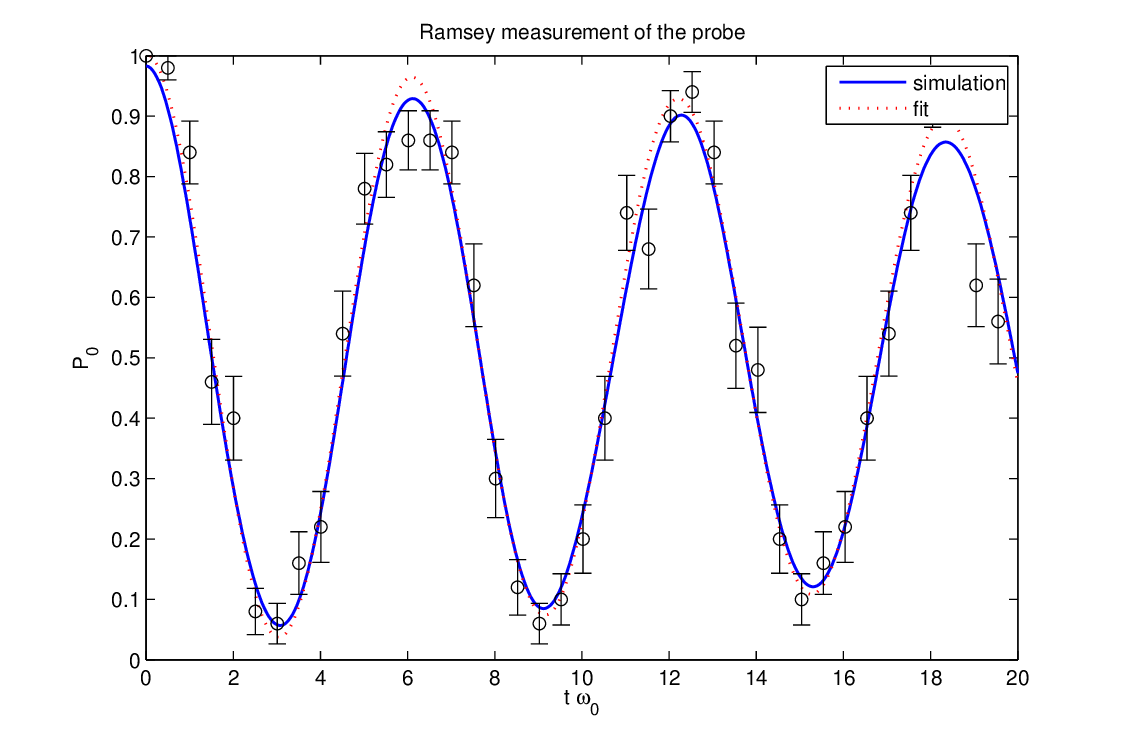}
\caption{Measurement procedure under Markovian noise.
The continuous curve represents the probability of finding
the probe in the state~$\ket{p_0}$, the circles
averaged measurement results and the dotted line a least squares fit to them.
$\hbar \omega_0 = h \cdot 25$~MHz is the energy scale of the Hamiltonians involved.
}
\label{fig:noise}
\end{figure}

The results of the simulation are presented in~\Figref{fig:noise}.
The noise, together with the slight nonadiabaticity of
the evolution, cause excitations out of the ground state
which result in a signal consisting of multiple harmonic
modes. However, the ground state mode still dominates and with a
realistic level of noise and relatively modest
statistics we are able to
reconstruct $\omega$ to a relative precision of
better than~$0.01$. 
This includes both the uncertainty due to finite statistics, and the
errors introduced by the environment through decoherence and the Lamb-Stark
shift.

In an experimental implementation there may also be other noise
sources not considered here related to the measurement process itself,
as well as systematic (hardware) errors in the quantum processor
(qubits, couplers etc.).
Nonetheless, our results indicate that a simple experimental
implementation of the simulation scheme could be possible using
existing hardware.
For future implementations, there exist
fault tolerant constructions for adiabatic quantum
computing~\cite{Jordan2006,Lidar2008,AAN09, PhysRevLett.103.120504, 2010PhRvE..82c1106C}.

\section{Conclusion}\label{sec:conclusion}

The presented simulation scheme differs significantly from existing
gate-model methods.
Instead of a series of coherent gate operations, it uses
an adiabatic control sequence which may
require less complicated control hardware. 
At the measurement stage we require single-qubit gate operations
and measurements, but only on the probe qubit. These operations should be relatively
simple to implement compared to a full Trotter decomposition of the
simulated Hamiltonian.
Without error correction our method has limited scalability, but it
might outperform gate-model simulators in some
small-to-medium-sized problems.
A simple experimental implementation could be
feasible with present-day technology.
In order to simulate a system of $n$~qubits with a two-local Hamiltonian
described by the graph~$G$,
ideally our scheme requires 
one probe qubit for the readout and
$n$~simulation qubits.
Additionally, if Hamiltonian gadgets are used to implement three-local
interactions,
one ancilla qubit is required for each two-local term in the
simulated observable (represented by an edge in the
corresponding graph).
In practice the number of ancillas may be slightly higher if more involved
types of gadgets are used to implement the three-local interactions.
The total number of qubits required thus scales as $O(n)$ for sparse
graphs and $O(n^2)$ for maximally connected ones.

\begin{acknowledgments}
We thank Patrick Rebentrost, Man-Hong Yung, Viv Kendon and Terry Rudolph for helpful discussions.
This work received funding from the Faculty of Arts and Sciences at
Harvard University (JDB and VB), EPSRC grant EP/G003017/1 (JDB), Merton College (JF)
and Academy of Finland (VB).
VB visited Oxford using financial support from the EPSRC grant.  
AA-G and VB thank DARPA under the Young Faculty Award N66001-09-1-2101-DOD35CAP, the Camille and
Henry Dreyfus Foundation, and the Sloan Foundation for support.
AA-G and JDW thank ARO under contract~{W911-NF-07-0304}.
\end{acknowledgments}

\onecolumngrid
\appendix 

\section{Supporting material}
\label{sec:suppmat}
Here we provide supplementary information on
the adiabatic quantum simulation method.
In addition, we have created a MATLAB-based numerical simulation of
a simple instance of the hybrid simulator subject to Markovian noise.
The full source code of the simulation is available on request.

\subsection{Gadgets}
One possible way to effect the 3-local Hamiltonians we require in our
construction is to approximate them with 2-local interactions.

\subsubsection{Perturbative gadget}
This \textit{Gadget Hamiltonian} construction was proposed in~\cite{kempe2006} and has since been used
and extended by others~\cite{oliveira2008,biamonte-2008-78}.
These papers contain further background details including notation.

We label the working qubits~$1$--$3$.  The effective 3-local interaction
is well approximated on qubits $1$, $2$ and $3$ (within~$\epsilon$) in a
low energy subspace --- constructed by adding a penalty Hamiltonian to
an ancillary (mediator) qubit~$m$.  The mediator qubit~$m$ doubles the
size of the state space and the penalty Hamiltonian splits the Hilbert
space into low and high energy subspaces, separated by an energy
gap~$\Delta$ (which is inversely proportional to a polynomial in~$\epsilon$
--- in our case reduced to $\Delta \geq \epsilon^{-3}$).
The details of the gadget we develop for use in this work follow. 

We will apply a Hamiltonian $H_p$ to the mediator qubit~$m$ as well as a
Hamiltonian $V$ to qubits $1$--$3$ and~$m$.  The Hamiltonian~$H_p + V$
has a ground state that is $\epsilon$-close to the desired operator
$JA_1\otimes A_2 \otimes A_3 \otimes \ketbra{0}{0}$.
\be
H_p := \Delta \ket{1}\bra{1}_m 
\ee 
\be 
V(J,\Delta(\varepsilon)) := \textbf{y} + \Delta^{1/3}\ket{0}\bra{0}_m - \Delta^{1/3}A_1\otimes A_2 + \frac{\Delta^{2/3}}{\sqrt{2}}(A_2-A_1)\otimes\sigma_x 
+ JA_3\otimes(\I-\Delta^{2/3}\ket{1}\bra{1}_m)
\ee 
where $\textbf{y}$ is some Hamiltonian already acting on qubits $1-3$ as well as a
possible larger Hilbert space.  Note the gadget above assumes $A_i^2=\I,~\forall i=1,2,3$.  

The so called \textit{self-energy expansion} \eqref{eqn:se} under
appropriate conditions is known to provide a series approximation to the
low-lying eigenspace of an operator. To verify that the Hamiltonian $H_p + V$
gives the desired approximation, one relies on expansion of the self
energy to $4^{th}$ order (here the higher order terms give rise to
effective interactions greater than second order):
\begin{equation}
\label{eqn:se}
\Sigma_-(z) = \bra{0}V\ket{0}
+\frac{\bra{0}V\ket{1}\bra{1}V\ket{0}}{z-\Delta}
+\frac{\bra{0}V\ket{1}\bra{1}V\ket{1}\bra{1}V\ket{0}}{(z-\Delta)^2}
+\mathcal{O}\left(\frac{\|V\|^4}{\Delta^3}\right),
\end{equation}
where the operator is written in the $\{\ket{0},\ket{1}\}$ basis of~$m$.
One considers the range $|z|\leq2|J|+\epsilon$ and notes that
$\|\Sigma_-(z)-JA_1\otimes A_2 \otimes A_3\| = \mathcal{O}(\epsilon)$
and applies the Gadget Theorem~\cite{kempe2006}.


Before concluding this section, we note that care must be taken when adiabatically evolving gadgets.  Reference~\cite{aharonov-2007-37} contains a proof that the linear path Hamiltonian is universal for adiabatic quantum computation.  Universality (and hence a non-exponentially contracting gap) remains when the locality of the construction is reduced using perturbative gadgets~\cite{oliveira2008}.  

\subsubsection{Exact diagonal gadget}
Modern experimental implementations of adiabatic quantum computers
typically are limited to only being able to couple spins with one type
of coupling (\eg $\sigma_z \otimes \sigma_z$).  In such a case, the standard
gadget Hamiltonian approach will not work as these gadgets require
multiple types of couplers~\cite{PhysRevA.77.052331}.  We will now provide
a new type of gadget Hamiltonian which creates an
effective, exact $\sigma_z \otimes \sigma_z \otimes \sigma_z$
interaction in the low energy subspace using just
$\sigma_z \otimes \sigma_z$ and local terms.

Let us first assume that we have access to a penalty function
$H_{\textsc{AND}}(x_*,x_1,x_2)$, where $x_i\in\{0,1\}$ such that
$H_{\textsc{AND}} = 0$ any time $x_* = x_1 x_2$ and is greater than some
large constant $\Delta$ for all $x_* \neq x_1 x_2$.
By solving a system of constraints, such a penalty function is
possible to write as a sum of two-local terms:
\begin{equation}\label{eqn:Hwedge}
H_{\textsc{AND}}(x_*,x_1,x_2) = \Delta(3x_* +x_1 x_2-2x_* x_1-2x_* x_2).
\end{equation}

The Boolean variables $\{x_i\}$ can be represented on qubits using the
correspondence~$x_i \simeq \hat{x}_i = \ketbra{1}{1}_i$. Furthermore, we have
$\sigma^z = \I-2\ketbra{1}{1}$ as usual.
This gives us
\begin{align}
\notag
H &= \sigma^z_1 \sigma^z_2 \sigma^z_3 =
\I -2 \hat{x}_1 -2 \hat{x}_2 -2 \hat{x}_3
+4 \hat{x}_1 \hat{x}_2 +4 \hat{x}_1 \hat{x}_3 +4 \hat{x}_2 \hat{x}_3
-8 \hat{x}_1 \hat{x}_2 \hat{x}_3\\
\label{eq:6}
&\hat{=} \I -2 \hat{x}_1 -2 \hat{x}_2 -2 \hat{x}_3
+4 \hat{x}_1 \hat{x}_2 +4 \hat{x}_1 \hat{x}_3 +4 \hat{x}_2 \hat{x}_3
-8 \hat{x}_* \hat{x}_3 +H_{\textsc{AND}}(x_*,x_1,x_2),
\end{align}
where~\eqref{eq:6} holds in the low energy subspace after the
introduction of an ancilla qubit~$*$.
By writing the Boolean operators in terms of Pauli matrices, we obtain
\begin{equation}
\notag
H = \left(\frac{3}{4} \Delta -1\right) \I
+\left(2-\frac{\Delta}{2}\right) \sigma^z_* +\left(\frac{\Delta}{4}-1\right) \left(\sigma^z_1+\sigma^z_2\right) +\sigma^z_3
-\frac{\Delta}{2} \sigma^z_* \left(\sigma^z_1+\sigma^z_2\right) -2 \sigma^z_* \sigma^z_3
+\left(\frac{\Delta}{4}+1\right) \sigma^z_1 \sigma^z_2 +\left(\sigma^z_1+\sigma^z_2\right) \sigma^z_3.
\end{equation}
We have hence provided a method which allows one to create diagonal
k-local couplings using one- and two-local couplings.  This method opens the
door to simulate a wider range of Hamiltonians using current and next
generation quantum computing technology.  In addition, it should also
prove useful in future investigations into the fault tolerance of the
proposed protocol, in a way similar to the comparison
in~\cite{Brown2006}.

\subsection{Average Hamiltonian Method}
An alternate method of generating the special Hamiltonians we require
is to make use of the time average of a series of simpler generating
Hamiltonians~\cite{waugh1968approach}. It has long been known that by 
regularly switching between a set of fixed Hamiltonians $\{H_i\}$,
it is possible to approximate time evolution under any other 
Hamiltonian $H$, provided that $H$ is contained within the algebra 
generated by $\{H_i\}$. This fact lies at the heart of both average 
Hamiltonian theory and geometric control theory. Over the years many 
methods have been developed for making the approximations accurate 
to high order, however here we will focus only on approximations 
correct to first order.

In order to construct an average Hamiltonian we will make use of a first order approximation to the Baker-Campbell-Hausdorff formula:
\begin{equation}
\log(e^A e^B) \approx A + B + \frac{1}{2}[A,B].
\end{equation}
From this we obtain formulae for approximating the exponential of both the sum and the Lie bracket of $A$ and $B$ to first order.
\begin{eqnarray}
e^{A+B} &\approx& e^{A} e^{B},\\
e^{[A,B]} &\approx& e^{A} e^{B} e^{-A} e^{-B}.
\end{eqnarray}
These equations can be related to time evolution under some
Hamiltonians, $H_i$, by replacing A and B with operators of the form
$i H_i t/\hbar$. Clearly by applying these rules recursively,
it is possible to generate any Hamiltonian in the Lie algebra
generated by the initial set of static Hamiltonians. We note that the
combination of any pairwise entangling Hamiltonian, such as an Ising
Hamiltonian together with tunable local Z and X fields is sufficient
to generate the full Lie algebra $\mathfrak{su}(2^N)$ for an N qubit system, and
so can be used to approximate an arbitrary Hamiltonian.

Although the time-varying Hamiltonian means that the system does not have a ground state in the normal sense, the average energy of the system is minimized when the system is within $O(t)$ of the ground state of the target average Hamiltonian. As a result of this, if the time scale for switching between Hamiltonians is small compared to the time scale for the adiabatic evolution of the system, the system will behave as if it was experiencing the average Hamiltonian.

\subsection{Noise model}

The noise model used in our MATLAB simulation consists of
a separate bosonic heat bath coupled to each of the qubits.
The baths have Ohmic spectral densities,
\be
J(\omega) = \hbar^2 \omega \Theta(\omega) \Theta(\omega_c-\omega),
\ee
where the cutoff~$\omega_c$ was chosen to be above every transition
frequency in
the system, and are assumed to be uncorrelated.
Each bath is coupled to its qubit through an interaction operator of the
form
\be
D = \lambda (\cos(\alpha)\sigma_z +\sin(\alpha)\sigma_x).
\ee
Using the Born-Markov approximation we obtain an
evolution equation which is of the Lindblad form~\cite{BP}.
Denoting the level splitting of a qubit by~$\hbar \Delta$,
we obtain the following uncoupled single-qubit decoherence times:
\begin{align}
T_{1}^{-1} &= \lambda^2 \sin^2(\alpha)
2 \pi \Delta \coth(\beta \hbar \Delta/2),\\
T_{2}^{-1} &= \frac{1}{2} T_{1}^{-1} +\lambda^2 \cos^2(\alpha) 4 \pi / (\hbar
\beta),
\end{align}
where $\beta = \frac{1}{k_B T}$.
Given $T_1$, $T_2$, $T$ and $\Delta$, we can solve the bath coupling
parameters $\lambda$ and $\alpha$ separately for each qubit, and then use
the same noise model in the fully interacting case.
\begin{table}
\caption{Noise model parameters}
\centering
\begin{tabular}{lr}
\hline \hline
$T_1$ & $\sim N(1.0, 0.1)~\mu \text{s}$\\
$T_\phi$ & $\sim N(1.3, 0.1)~\mu \text{s}$\\
$T_2^{-1}$ & $= \frac{1}{2}T_1^{-1} +T_\phi^{-1}$\\
$T$ & 20 mK\\
$\omega_c$ & 20 $\omega_0$\\
$\omega_0$ & $2\pi \cdot 25$ MHz\\
\hline
\end{tabular}
\label{table:par}
\end{table}
The values used for the parameters are presented in Table~\ref{table:par}.
The single-qubit decoherence times $T_1$ and $T_2$ were chosen to be
compatible with recent coupled-qubit experiments with fully tunable
couplings such as~\cite{niskanen2007}.
The bath temperature~$T$ and the energy scale~$\hbar \omega_0$ of
$H_T$, the Hamiltonian to be simulated,
were likewise chosen to match the temperatures and coupling strengths
found in contemporary superconducting qubit experiments.
To simulate manufacturing uncertainties, the actual values of the
$T_1$ and $T_\phi$
parameters for the individual qubits were drawn from a Gaussian
distribution with a small standard deviation.

\subsection{Measurement}

The pre-measurement system Hamiltonian is 
\begin{equation}
\label{eq:H1}
H_{1} = H_{S,T} \otimes \bbone + \bbone \otimes H_P.
\end{equation}
The operators $H_S$ and $H_P$ can be expanded in terms of their
eigenvalues and eigenstates using the (possibly degenerate) spectral
decomposition
\be
H_{S} = \sum_{kj} s_k \projector{s_{k,j}}, \quad s_0 < s_1 < \ldots
\ee
and correspondingly for~$H_P$.

Let the states of systems $S$ and $P$ begin in their ground state
subspaces, 
the full normalized state $\ket{\text{g}}$ of
the simulator belonging to the ground state subspace
of the non-interacting Hamiltonian $H_1$:
\be
\ket{\text{g}} \in \Span \{ \ket{s_{0,k}}\otimes\ket{p_{0,l}} \}_{k,l}.
\ee
$S$~and $P$ are brought adiabatically into interaction with each other.
The Hamiltonian becomes
\begin{equation}
\label{eq:H2}
H_2 = H_1 + \underbrace{A \otimes (\bbone_P -\Pi_{p_0})}_{H_{SP}},
\end{equation}
where the operator~$A$ corresponds to an observable of the simulated system
that commutes with~$H_S$, and
\be
\Pi_{p_0} = \sum_m \projector{p_{0,m}}
\ee
is the projector to
the ground state subspace of $H_P$.
Because $A$ and $H_S$ commute, they have shared eigenstates:
\begin{align}
H_S \ket{s_{k,j}} &= s_k \ket{s_{k,j}},\\
A\ket{s_{k,j}} &= a_{k,j} \ket{s_{k,j}}.
\end{align}

\subsubsection{Ground state lemma}

We will now show that Hamiltonians $H_1$ and $H_2$ have the same
ground state subspace given that~$a_{\text{min}} +p_1-p_0  > 0$
where~$a_{\text{min}}$ is $A$'s lowest eigenvalue.
\begin{lemma}
Let $H_1$ and $H_2$ be the finite dimensional Hamiltonians defined
previously in~\eqref{eq:H1} and~\eqref{eq:H2},
and $a_{\text{min}} +p_1-p_0 > 0$.
Now, iff $H_1 \ket{\star} = \lambda \ket{\star}$, where
$\lambda$ is the smallest eigenvalue of $H_1$,
then
$H_2 \ket{\star} = \kappa \ket{\star}$, where
$\kappa$ is the smallest eigenvalue of $H_2$.
\end{lemma}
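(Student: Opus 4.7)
The plan is to exploit the commutativity of $H_1$ and $H_{SP}$, which will let us diagonalize $H_1$ and $H_2$ in a common product basis and compare spectra term by term. The only nontrivial input will be the sign condition $a_{\min}+p_1-p_0>0$, which will be used to rule out level crossings between the $n=0$ and $n\geq 1$ sectors of the probe.

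First I would verify $[H_1,H_{SP}]=0$. Since $[H_S,A]=0$ by assumption, and since $\bbone_P-\Pi_{p_0}$ is built from spectral projectors of $H_P$ and hence commutes with $H_P$, the cross-commutator decomposes into two pieces each of which vanishes. Consequently $H_1$, $H_{SP}$ and $H_2=H_1+H_{SP}$ are simultaneously diagonalizable. Using the common eigenbasis $\{\ket{s_{k,j}}\otimes\ket{p_{n,m}}\}$, I would then read off the two spectra:
\begin{align}
H_1\,\ket{s_{k,j}}\otimes\ket{p_{n,m}} &= (s_k+p_n)\,\ket{s_{k,j}}\otimes\ket{p_{n,m}},\\
H_2\,\ket{s_{k,j}}\otimes\ket{p_{n,m}} &= \bigl(s_k+p_n+a_{k,j}\,\chi_{n\neq 0}\bigr)\ket{s_{k,j}}\otimes\ket{p_{n,m}},
\end{align}
where $\chi_{n\neq 0}$ equals $1$ for $n\geq 1$ and $0$ for $n=0$, because $(\bbone_P-\Pi_{p_0})$ annihilates the ground subspace of $H_P$ and acts as the identity on its orthogonal complement.

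The main step, and the only place the hypothesis is really used, is the following spectral comparison. On the $n=0$ sector the two Hamiltonians agree; in particular both have minimum eigenvalue $s_0+p_0$ on that sector, with eigenspace $\Span\{\ket{s_{0,j}}\otimes\ket{p_{0,l}}\}_{j,l}$. On the $n\geq 1$ sector, $H_2$'s eigenvalues satisfy $s_k+p_n+a_{k,j}\geq s_0+p_1+a_{\min}$, and the hypothesis $a_{\min}+p_1-p_0>0$ gives $s_0+p_1+a_{\min}>s_0+p_0$. Hence the global minimum of $H_2$ is attained exclusively in the $n=0$, $k=0$ block, which is precisely the ground subspace of $H_1$. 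This proves both implications of the iff: a ground state of $H_1$ lies in this common subspace and is annihilated by $H_{SP}$, so it is a ground state of $H_2$ with eigenvalue $\kappa=\lambda=s_0+p_0$; conversely, any $H_2$-ground state must live in that same subspace and therefore is an $H_1$-eigenstate with eigenvalue $\lambda$.

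The only conceptual subtlety — and thus the ``hard part'' — is the bound $s_k+p_n+a_{k,j}\geq s_0+p_1+a_{\min}$ for $n\geq 1$. One has to be careful that the minimum of $a_{k,j}$ need not be attained on the ground subspace of $H_S$, so the three minimizations over $k$, $n$, and $j$ must be carried out independently, which is legitimate because the three terms in the eigenvalue are additive and each is bounded below separately. Everything else is bookkeeping in the joint eigenbasis guaranteed by $[H_1,H_{SP}]=0$.
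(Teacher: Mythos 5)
Your proof is correct, and it takes a genuinely different route from the paper's. The paper argues variationally: it lower-bounds $\bra{\phi}H_1\ket{\phi}$ by $s_0+p_0$ and $\bra{\phi}H_{SP}\ket{\phi}$ by $0$ for an arbitrary state $\ket{\phi}$, first under the extra assumption $a_{\text{min}}\ge 0$, and then removes that assumption by the shift $H_S\mapsto H_S+a_{\text{min}}I$, $H_P\mapsto H_P-a_{\text{min}}\Pi_{p_0}$, $A\mapsto A-a_{\text{min}}I$, which leaves $H_2$ invariant while making $A$ nonnegative; the hypothesis $a_{\text{min}}+p_1-p_0>0$ enters only to guarantee that the shifted $H_P$ still has $\{\ket{p_{0,m}}\}_m$ as its ground subspace. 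You instead observe that $[H_1,H_{SP}]=0$ (which the paper relegates to a commented-out remark), diagonalize everything simultaneously in the product eigenbasis, and compare the spectra sector by sector, with the hypothesis used directly as the strict inequality $s_0+p_1+a_{\text{min}}>s_0+p_0$. Your route buys a cleaner and more complete result: it identifies the ground eigenspace of $H_2$ exactly as $\Span\{\ket{s_{0,j}}\otimes\ket{p_{0,l}}\}$ and therefore delivers both directions of the ``iff'' explicitly, whereas the paper's variational argument as written only establishes that a ground state of $H_1$ remains a ground state of $H_2$ and leaves the converse implicit (one would need to track the equality cases in its inequalities). The paper's approach, on the other hand, does not require exhibiting the full joint spectrum and would survive in settings where only expectation-value bounds are available; the shift trick is a nice reusable device but is rendered unnecessary by your direct spectral comparison. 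Your closing caveat --- that the minimizations over $k$, $n$, and $j$ must be done independently because $a_{k,j}$ need not be minimized on the ground sector of $H_S$ --- is exactly the right point of care, and is the same reason the paper must invoke the global $a_{\text{min}}$ rather than $a_0$.
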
 

\begin{proof}
Firstly, we have
\begin{align*}
\bra{g} H_1 \ket{g} &= s_0 +p_0,\\
\bra{g} H_{SP} \ket{g} &= 0. 
\end{align*}
Now, expanding an arbitrary normalized state $\ket{\phi}$ in the
eigenbases of $H_S$ and~$H_P$,
\[
\ket{\phi} = \sum_{xyij} c_{xyij} \ket{s_{x,i}}\ket{p_{y,j}},
\]
we get
\begin{align*}
\bra{\phi} H_1 \ket{\phi} &= \sum_{xyij} |c_{xyij}|^2 (s_x +p_y)\\
 &\ge \sum_{xyij} |c_{xyij}|^2 (s_0 +p_0) = s_0 +p_0,\\
\bra{\phi} H_{SP} \ket{\phi} &=  \bra{\phi} \sum_{xyij} c_{xyij}  A
\ket{s_{x,i}} (\bbone_P -\Pi_{p_0}) \ket{p_{y,j}}\\
&= \bra{\phi} \sum_{xij} \sum_{y\ge1} c_{xyij}  a_{x,i} \ket{s_{x,i}}
\ket{p_{y,j}}\\
&= \sum_{xij} \sum_{y\ge1} |c_{xyij}|^2  a_{x,i}\\
&\ge a_{\text{min}} \sum_{xij} \sum_{y\ge1} |c_{xyij}|^2 \ge 0 \quad
\text{if}
\quad a_{\text{min}} \ge 0.
\end{align*}
Hence, if $A$'s lowest eigenvalue~$a_{\text{min}} \ge 0$ then $\ket{g}$ is
the
ground state of~$H_2 = H_1+H_{SP}$ as well. If $A$ is not nonnegative, we
can
perform the transformation
\begin{align*}
H_S' &= H_S +a_{\text{min}} \I,\\
H_P' &= H_P -a_{\text{min}} \Pi_{p_0},\\
A' &= A -a_{\text{min}} \I.
\end{align*}
This leaves $H_2$ invariant, but makes $A'$ nonnegative.
As long as $a_{\text{min}} +p_1-p_0 > 0$ then $\{\ket{p_{0,m}}\}_m$
still span the ground state subspace of $H_P'$ and the above analysis
remains valid.
\end{proof}

\subsubsection{Ground state degeneracy and excitations}

If the ground state subspace of~$H_{S,T}$ is degenerate and overlaps
more than one eigenspace of~$A$, or
the simulation register~$S$ has excitations to higher
energy states at the beginning of the measurement phase, we need a
more involved analysis of the measurement procedure.
Assume the pre-measurement-phase state of the simulator is given by
\begin{align}
\rho_0
&= \left(\sum_{klmn} c_{klmn}\ketbra{s_{k,l}}{s_{m,n}} \right) \otimes 
\begin{pmatrix}
a & 0\\
0 & 1-a
\end{pmatrix}.
\end{align}
All three terms in $H_2$ commute given that $\comm{H_{S}}{A} = 0$.
Hence
\begin{align}
\notag
e^{qH_2}
&= (e^{qH_{S}}\otimes \bbone_P)(\bbone_S \otimes e^{qH_P})(e^{qH_{SP}})\\
&= (e^{qH_{S}} \otimes e^{qH_P})
(e^{qA} \otimes (\bbone_P -\Pi_{p_0}) +\bbone_S \otimes \Pi_{p_0}).
\end{align}

As a result of the measurement procedure right before the actual
measurement the state is given by
\begin{align}
\notag
\rho_1(t)
&= \Had e^{-i t H_2/\hbar} \Had \rho_0 \Had e^{i t H_2/\hbar} \Had\\
\notag
&=
\Had e^{-i t H_{SP}/\hbar} \left(\sum_{klmn} c_{klmn}
e^{-it(s_k-s_m)/\hbar}\ketbra{s_{k,l}}{s_{m,n}} \right) \otimes
\frac{1}{2}
\begin{pmatrix}
1 & e^{it\delta/\hbar}(2a-1) \\
e^{-it\delta/\hbar}(2a-1) & 1
\end{pmatrix}
e^{i t H_{SP}/\hbar} \Had\\
&=
\sum_{klmn} \left(c_{klmn} e^{-it(s_k-s_m)/\hbar}\ketbra{s_{k,l}}{s_{m,n}} \right)
\otimes
\frac{1}{2} \Had
\begin{pmatrix}
1 & e^{it\omega_{m,n}}(2a-1) \\
e^{-it\omega_{k,l}} (2a-1) & e^{-it(a_{k,l}-a_{m,n})/\hbar}
\end{pmatrix}
\Had,
\end{align}
where $\omega_{k,l} = (a_{k,l}+\delta)/\hbar$ and $\Had$ is the
Hadamard gate operating on the probe.
Projecting the probe to the ground state subspace, we get $\Pi_{p_{0}}
\rho_1(t)
\Pi_{p_{0}}=$
\begin{equation}
=\sum_{klmn} \left(c_{klmn} e^{-it(s_k-s_m)/\hbar}\ketbra{s_{k,l}}{s_{m,n}} \right)
\otimes
\frac{1}{4} 
(1 +e^{-it(a_{k,l}-a_{m,n})/\hbar}
+ 2 \cos (\omega_{m,n} t) (2a-1))
\ketbra{0}{0}
\end{equation}
Thus the probability of finding the probe in the ground state subspace is
given
by a superposition of harmonic modes:
\begin{align}
P_{0}(t) &= \tr(\Pi_{p_{0}} \rho_1(t) \Pi_{p_{0}}) =
\frac{1}{2} (1 +\sum_{kl} c_{klkl} \cos(\omega_{k,l} t)(2a-1)).
\end{align}

\bibliography{hybrid}


\end{document}